\documentclass[journal,onecolumn]{IEEEtran}
%


%

\usepackage{caption}
\usepackage{authblk}
\usepackage{amsthm}
\usepackage{hyperref}
\usepackage{changepage} 
\usepackage[super]{nth}
\usepackage{tabularx}
\usepackage{microtype}
\usepackage{mathtools}
\usepackage[subnum]{cases}
\usepackage{empheq}
\usepackage{multirow}
\usepackage{amssymb}
\usepackage{fancyhdr,lipsum}
%
\usepackage{cite}

%
\ifCLASSINFOpdf
\else
\fi
\hyphenation{op-tical net-works semi-conduc-tor}

\begin{document}
%
\title{Prosumer\index{Prosumer} Behavior: \\ 
Decision Making\index{Decision making} with Bounded Horizon}
%
%
%

\author[1]{Mohsen~Rajabpour
\thanks{This work is supported in part by the U.S. National Science Foundation (NSF) under the grants 1541069 and 1745829.}
}
\author[2]{Arnold~Glass}
\author[3]{Robert~Mulligan}
\author[1]{Narayan~B.~Mandayam}

\affil[1]{WINLAB, Department of Electrical and Computer Engineering, Rutgers University, NJ, USA.}
\affil[2]{Psychology Department, Rutgers University, NJ, USA.}
\affil[3]{The Kohl Group, Inc., NJ, USA.}

%
%

\markboth{}%
{Shell \MakeLowercase{\textit{et al.}}: Bare Demo of IEEEtran.cls for IEEE Journals}
%



\maketitle

\begin{abstract}
Most studies of prosumer\index{Prosumer} decision making\index{Decision making} in the smart grid\index{Smart grid} have focused on single, temporally discrete decisions within the framework of expected utility theory\index{Expected Utility Theory} (EUT\index{EUT}) and behavioral theories such as prospect theory\index{Prospect Theory}. In this work, we study prosumer\index{Prosumer} decision making\index{Decision making} in a more natural, ongoing market situation in which a prosumer\index{Prosumer} has to decide every day whether to sell any surplus energy units generated by the solar panels on her roof or hold (store) the energy units in anticipation of a future sale at a better price. Within this context, we propose a new behavioral model that extends EUT\index{EUT} to take into account the notion of a bounded temporal horizon\index{Temporal horizon} over which various decision parameters are considered. Specifically, we introduce the notion of a bounded time window\index{Time window} (the number of upcoming days over which a prosumer\index{Prosumer} evaluates the probability that each possible price will be the highest) that prosumers\index{Prosumer} implicitly impose on their decision making\index{Decision making} in arriving at ``hold" or ``sell" decisions. The new behavioral model assumes that humans make decisions that will affect their lives within a bounded time window\index{Time window} regardless of how far into the future their units may be sold. Modeling the utility of the prosumer\index{Prosumer} using parameters such as the offered price on a day, the number of energy units the prosumer\index{Prosumer} has available for sale on a day, and the probabilities of the forecast prices, we fit both traditional EUT\index{EUT} and the proposed behavioral model with bounded time windows\index{Time window} to data collected from 57 homeowners over 68 days in a simulated energy market\index{Energy market}. Each prosumer\index{Prosumer} generated surplus units of solar power and had the opportunity to sell those units to the local utility at the price set that day by the utility or hold the units for sale in the future. For most participants\index{Participant}, a bounded horizon in the range of 4--5 days provided a much better fit to their responses than was found for the traditional (unbounded) EUT\index{EUT} model, thus validating the need to model bounded horizons imposed in prosumer\index{Prosumer} decision making\index{Decision making}.
\end{abstract}


%
\IEEEpeerreviewmaketitle

\section{Introduction}
%
%
%
%
\IEEEPARstart{T}{he} smart grid\index{Smart grid} is a twenty-first-century evolution of the traditional electric power grid resulting from the convergence of several factors including  growing demand for electricity and a variety of technological developments, environmental imperatives, and security concerns. Unprecedented changes over the past few years in the smart grid\index{Smart grid} have enabled power utility companies to construct a two-way communications infrastructure to monitor and manage devices across the power system.  Smart meters that communicate usage information directly to the power utility and smart appliances that allow consumers to easily schedule their usage at times of low demand are two such elements. Some utility companies have already implemented incentive programs to influence the timing of consumers' usage to better balance demand.  These and other elements of the smart grid\index{Smart grid} infrastructure such as renewable energy generation and storage technologies (e.g. solar panels, wind towers, batteries.) have been developed and are being deployed across the United States and internationally.

Two other important characteristics of the smart grid\index{Smart grid} are more distributed power generation capability (versus the traditional centralized power station model) and a larger role for renewable energy resources like wind and solar (and an eventual diminishing reliance on traditional fossil fuel-fired power plants). These two characteristics merge in the proliferation of customer-owned rooftop solar panels (and less frequently, wind turbines). Together with a smart metering technology called ``net metering," which allows the flow of electricity in both directions across the meter (from power grid to home and from home to power grid), a new widely distributed power generation infrastructure is emerging in which an individual, so equipped, can be both a consumer and a producer of electric power. These individuals have been labeled ``prosumers."\index{Prosumer} They not only consume energy from the power utility but they can also produce energy to power their own homes and can often generate surplus electricity that they can feed back into the grid for an energy credit or sell to the utility at the market price.  When also equipped with a storage battery, and multiplied by thousands or tens of thousands, these prosumers\index{Prosumer} become major players in the energy market\index{Energy market} and a significant consideration in the design and management of the smart grid\index{Smart grid}.  

How prosumers\index{Prosumer} behave in this emerging energy management\index{Scenario} scenario--their perceptions, motivations and decisions--will influence the success of the smart grid\index{Smart grid} \cite{7426734,7781198,6855114,8031035}. For example, how prosumers\index{Prosumer} make decisions about whether and when to sell the surplus energy generated by their solar panels or wind turbines will inform the design of optimal energy management systems. Prosumers\index{Prosumer} can optimize their profits by deciding on the amount of energy to buy or to sell \cite{7426734,7781198,6855114}. Thus, important technical and economic challenges emerge as prosumers\index{Prosumer} influence demand-side management.

In response to this realization, a substantial literature has emerged with the objective of modeling the behaviors of electricity prosumers\index{Prosumer} in smart grid\index{Smart grid} settings.  One early example called out the need to consider the behavior of large numbers of small scale prosumers\index{Prosumer} in smart grid\index{Smart grid} planning efforts \cite{5638967}. The authors outlined a framework for modeling this behavior, but did not develop such a model. Several other researchers have taken a more quantitative approach, using game-theory to analyze demand-side management and to model prosumer\index{Prosumer} decision behavior in smart grid\index{Smart grid} scenarios\index{Scenario}. A limitation of these papers is that they consider the prosumers' behavior to be consistent with EUT\index{EUT} \cite{5628271,6552997,6279592,101007hajj}. EUT\index{EUT} provides a framework for optimizing the amount of money earned within the narrow context-free constraints of that theory. However, EUT\index{EUT} has often been found to not accurately account for how people respond in real-life situations.

To more accurately describe prosumer\index{Prosumer} behavior in realistic smart  grid  scenarios\index{Scenario},  a few recent studies have utilized PT for analyzing demand- side management in the smart grid\index{Smart grid}. PT is a descriptive model of prosumer\index{Prosumer} behavior that provides an empirical and mathematical framework that represents subjective utility functions instead of the objective utility functions of EUT\index{EUT} \cite{KahTev,TevKah}. These more recent investigations \cite{7426734,7781198,6855114,8031035}, \cite{7852237,6895275,7360934} have achieved some success in modeling prosumer\index{Prosumer} behavior by assuming that their decision making\index{Decision making} follows the percepts of PT. To our knowledge however, no studies have observed the decision behavior of actual prosumers\index{Prosumer} in the field, or that of subjects emulating prosumer\index{Prosumer} behavior in a laboratory simulation of a smart grid\index{Smart grid} scenario\index{Scenario}.

The approach taken here is complementary to the insights provided by PT and extends the previous works in two regards:  (1) from a methodological perspective, the current study measures the behaviors of human participants\index{Participant} (prosumers\index{Prosumer}) in a realistic smart grid\index{Smart grid} scenario\index{Scenario}, and (2) from a modeling perspective, it considers the effect of the subjective time frame within which a prosumer\index{Prosumer} operates when making decisions about whether and when to sell their surplus energy in an ongoing market. 

Analyses of decision making\index{Decision making} under uncertainty have typically investigated scenarios\index{Scenario} in which a single decision is required among two or more alternatives at a fixed point in time. In the smart grid\index{Smart grid} scenario\index{Scenario} considered here, however, prosumers\index{Prosumer} are engaged in an energy market\index{Energy market} in which they must make a series of nonindependent decisions over an extended period. For example, if they are generating and storing surplus electric power, and the price at which they can sell that power to the utility company can vary significantly over time due to several factors, they might have to decide on a weekly, daily, or even on an hourly basis, when to sell to maximize their profit. 

Within the framework of EUT\index{EUT}, if someone has something to sell, there is no parameter corresponding to a finite horizon within which a sale must be made. Furthermore, there is no concept of a time window\index{Time window} within that horizon over which a seller chooses to compute the utility of a sale. Consequently, the concept of a time window\index{Time window} has only rarely been mentioned or tested. One notable exception is when the seller is selling their own labor. The concept of a time horizon has been discussed by economists in the context of research on optimal job search strategies, looking in particular at the question of whether a job seeker should accept a job at an offered salary or wait for a job at a higher salary. One laboratory study \cite{Cox1,Cox2} imposed a finite horizon on subjects search decisions, but unlike our 10-week study, sessions were only an hour in length.  Also, unlike our realistic smart grid\index{Smart grid} simulation, the authors  used an abstract economic decision task rather than a realistic job search scenario\index{Scenario}.  

Another area of research in decision making\index{Decision making} that shares some characteristics with the prosumer\index{Prosumer} scenario\index{Scenario} is investing. Like prosumers\index{Prosumer}, individual investors have an opportunity to make a series of decisions over an extended time horizon and to selling none, some, or all  their resources. Looking at investment decisions under risk, one topic that has received considerable study has been the so-called disposition effect--the tendency of investors to sell too soon stocks that have increased in value relative to their purchase price and to hold on too long to those that have declined in value \cite{Shef}. A similar effect is seen in buying and selling real estate \cite{Genes}. 

Another investing phenomenon in which a decision horizon can come into play is what Benartzi and Thaler have dubbed myopic loss aversion\index{Loss aversion} (MLA) \cite{Bena}. These authors proposed MLA, a combination of loss aversion\index{Loss aversion} and another concept known as mental accounting \cite{Tha1,Tha2}, as an explanation for the ``equity premium puzzle"--the fact that in investment markets, equities are priced at a much higher premium (relative to bonds and other more stable investments) than would be predicted by standard economic models. To the extent that this myopia is temporal in nature--i.e., due to adopting a short-term horizon when longer-term would be more appropriate, this work could prove helpful in explaining prosumer\index{Prosumer} behavior in an ongoing energy market\index{Energy market}.

In the present study, we examine prosumer\index{Prosumer} behavior extending over 10 weeks, making it the first meaningful test of the role of time windows\index{Time window} in the behavior of sellers. The results reveal that an elaborated EUT\index{EUT} model that computes the probability of selling at a higher price within a fixed time window\index{Time window} predicts human decision making\index{Decision making} in the context of energy management better than conventional EUT\index{EUT}, providing crucial understanding of prosumer\index{Prosumer} decision making\index{Decision making} in the smart grid\index{Smart grid}. The elaborated EUT\index{EUT} model provides a framework and mathematical method to model how people make decisions within a bounded time window\index{Time window} that extends classic EUT\index{EUT}, which assumes an unbounded time window\index{Time window}. 

The rest of this chapter is organized as follows. Section \ref{1.2} describes the study methodology. Section \ref{1.3} presents the mathematical representation of EUT\index{EUT} and EUT with a time window\index{Time window}, and a method for determining the minimum price at which a prosumer\index{Prosumer} should sell. In section \ref{1.4} we present the data fitting\index{Data fitting} results and draw some conclusions.  Finally, section \ref{1.5} provides some discussion of the implications of these findings.

 

\section {Experimental Design: Simulation of an Energy Market\index{Energy market}}\label{1.2}
The decision behavior of 57 household decision-makers was studied in a simulation of a 10-week, collapsing horizon, smart energy grid scenario\index{Scenario}. Homeowners were recruited through an online research participant\index{Participant} recruiting service (\url{www.findparticipants.com}). Based upon demographic data gathered by the recruiting service and on participants' responses to a prestudy screening survey, 60 participants\index{Participant} were recruited, all of whom met the following criteria: 
\begin{itemize}
\itemsep0em 
\item	Homeowner;
\item Decision maker in the household regarding energy utility services;
\item PC, tablet, or smartphone user with access to the Internet; and
\item Committed to participating in the study by playing a short ``energy grid game\index{Grid game}" every day (except holidays) for 10 weeks.
\end{itemize}

Three subjects dropped out of the study in the first week. The remaining 57 all completed the 10-week study.

\paragraph{Recruitment and Compensation}

Subjects were compensated for study participation in a manner that incentivized them to attend carefully and perform well. All subjects earned a fixed amount (\$100) for completing the study provided they responded on at least 65 of the days. In addition to this fixed amount, subjects kept the ``profit" they earned by selling their surplus energy during the course of the game. Note that all subjects accumulated the same amount of surplus energy on the same schedule over the 70 days. However, the amount of profit earned could range from about \$40 to \$120 depending upon how well they played the game, i.e., how adept they were at selling when energy prices were highest. Finally, to further incentivize subjects to attend carefully to their task, the top three performers were awarded bonuses of \$25, \$50, and \$100.

\paragraph{The ``Grid Game\index{Grid game}" Scenario\index{Scenario}}
The framework for the study was a simulated smart grid\index{Smart grid} scenario\index{Scenario} in which participants\index{Participant} acted as prosumers\index{Prosumer}. They were asked to imagine that they had solar panels installed on their property, a battery system to store surplus energy generated by their solar panels, and a means of selling their surplus energy back to the power company for a profit. 

 In the scenario\index{Scenario}, the subjects' solar panels generated a small amount of surplus electricity (i.e., over and above household power requirements) on most days of the study. For purposes of this simulation, we avoided references to megawatt or kilowatt hours and simply referred to the units of electricity as units. In an effort to simulate variability introduced by factors like weather and household demand for electricity, the number of surplus units generated on each day was sampled from a distribution in which one unit of surplus power was generated on 50\% of the 70 days; 2 units on 35\% of the days and zero units on the remaining 15\% of days. 
 
Surplus energy was stored in the subjects' hypothetical storage batteries. A total of 89 units were generated over the course of the 10-week study. Each day, subjects were given the opportunity to sell some or all the surplus energy they had accumulated in their storage system. The subjects' task was to maximize their profits by selling their surplus electricity when prices were highest.

\paragraph{Energy Price Data}

The price paid by the fictional power company for prosumer-generated electricity varied from day to day in a manner that loosely mimicked the variation in wholesale electricity prices. The prices were designed to approximate the range and variability in the wholesale energy market\index{Energy market} in the United States. The data source for modeling prices was monthly reports of wholesale energy prices published by the U.S. Energy Industry Association 
(e.g., \url{www.eia.gov/electricity/monthly/update/wholesale_markets.cfm}) 
referencing data from March and April, 2016.

A distribution of daily price per unit was created, comprising 15 discrete prices ranging from \$0.10 to \$1.50 in 10-cent increments. The shape of the distribution was positively skewed, emulating real-world prices, with a mean of \$0.60 and a modal price of \$0.50. The probability of each price occurring on any given day of the study is shown in Table \ref{t1}. We refer to this set of 15 probabilities as the single-day probabilities ($p_i$) for the 15 possible prices. The daily price offered to participants\index{Participant} for their surplus energy was determined by sampling 70 values from this distribution. After the initial set of 70 daily prices was generated, an offset was applied to weekday (Mon--Fri) and weekend (Sat--Sun) prices such that the average weekday prices (\$0.70) were 40\% higher than average weekend prices (\$0.41). This offset did not change the characteristics of the overall distribution of prices. 

\begin{table}[b]
\centering 
\caption{Single-day probabilities of each of the 15 possible wholesale electricity prices.}
\small
 \begin{tabularx}{\columnwidth}{XXXXXXXXXXXXXXXX} 
 \hline
 \tiny{Price} & 0.1 & 0.2 & 0.3 & 0.4 & 0.5 & 0.6 & 0.7 & 0.8 & 0.9 & 1.0 & 1.1 & 1.2 & 1.3 & 1.4 & 1.5   \\ [0.5ex] 
 \hline
 $p_i$ & 0.03 & 0.06 & 0.09 & 0.12 & 0.14 & 0.11 & 0.09 & 0.08 & 0.07 & 0.06 & 0.05 & 0.04 & 0.03 & 0.02 & 0.01   \\ [0.5ex]
 \hline
\end{tabularx}
\label{t1} 
\end{table}

\paragraph{Instructions to Participants\index{Participant}}
At the beginning of the study, participants\index{Participant} were informed about the amount of energy they would generate and the distribution of prices they could expect.  They received the following instructions: 

\begin{adjustwidth}{1cm}{1cm}

\textit{Here is some important information to help you to maximize your profits from selling your units:}

\begin{enumerate}
\itemsep0em 
\item	For the 10-week period of our study, the price per unit paid by our imaginary power company can range from \$0.10 (10 cents/unit) to \$1.50/unit, in 10-cent increments. So, there are 15 possible prices--\$0.10, \$0.20, \$0.30, etc., up to \$1.50 per unit.
\item	The average price across the 10 weeks will be \$0.60/unit (60 cents per unit).
\item	On average, the price paid on weekdays (Mon--Fri) tends to be somewhat higher than on weekends because demand is greater during the work week.
\item	Over the 70 days, your system will generate a total of 90 units of surplus electric power. For purposes of our game, you can assume that your battery system can easily store this much power and there is no appreciable decay over time.
\end{enumerate}

Here is how the game will work. Every day before noon Eastern Time you will get a Grid Game\index{Grid game} e-mail providing you with the following information:

\begin{itemize}
\itemsep0em 
\item The number of units of surplus energy generated by your solar panels yesterday.  This will always be 0, 1, or 2 units.
\item The total number of units stored in your storage battery. You'll start the game with 5 units.
\item The amount of profit you have earned so far by selling your surplus energy.
\item The price the power company is paying for any units you sell today.
\item A table showing the probability that each of the 15 prices will occur at least once between today and the end of the study. For example, you will see:
\begin{itemize}
\itemsep0em 
\item[$\circ$]  \$1.50 per Unit -- 30\% chance
\item[$\circ$]  \$1.40 per Unit -- 45\% chance
\item[$\circ$]  \$1.30 per Unit -- 78\% chance
\item[$\circ$]  ... 
\item[$\circ$]  \$0.20 per Unit -- 59\% chance
\item[$\circ$]  \$0.10 per Unit -- 35\% chance

\end{itemize}
\end{itemize}
This table of probabilities is also provided to help you maximize your profits. The probabilities are based on actual variability in energy prices in the real energy grid. \textit{They will change gradually as we move through the study, so be sure to check them every day.}

Your task each day will be to respond to the e-mail by answering the following question:  Do you want to sell any units today and, if so, how many. In your brief e-mail reply, you will simply answer ``Yes" or ``No" and, if ``Yes" then provide a number of units you want to sell. This number can range from 1 to the total number stored. Be sure to take note of the total units you have stored so that you don't try to sell more than you have. \textit{You must respond no later than 8:00 p.m. Eastern Time every day.}
\end{adjustwidth}
\bigbreak
Before deriving the models, it should first be noted that daily energy prices, which were presented to participants\index{Participant} as ranging from \$0.10 to \$1.50 per unit, were transformed to the values 1 through 15 for model development and data analysis.


\section{Modeling Approaches}\label{1.3}
\subsection{Modeling Prosumer\index{Prosumer} Behavior Using Expected Utility Theory\index{Expected Utility Theory}}

Prosumer\index{Prosumer} participants\index{Participant} in our power-grid scenario\index{Scenario} have to make a new decision every day regarding whether to sell some, all, or none of their surplus energy units to a central utility at the price offered by the utility on that day. To make this decision they must assess the probability that they would receive a higher payoff in the future for those units if they choose not to sell them today.

To model this situation, we assume a range of possible prices that may be offered to the prosumer\index{Prosumer} each day. Further, we assume that the prosumer\index{Prosumer} is provided with a forecast of these prices, i.e., the probabilities of each of these prices occurring on a single day (i.e., the ``single day prices" given in Table \ref{t1}). On each day $d$, a participant\index{Participant} has one or more units of energy to sell. The price on offer that day for a unit of energy is denoted by $i_{d}$. Let $i_{d}^{*}$ denote the unit cutoff price\index{Cutoff price}, which is the lowest price on day $d$ for which the gain from  a sale on that day is greater than the expected gain from a hold on that day. 

The participant\index{Participant} decides whether to sell or hold some or all the units available for sale each day. Expected utilities are computed for sell and hold decisions, the components of which are the gains realized from selling and the potential (future) gains from not selling. Hence, the equation for computing the expected utility of selling a unit of energy on day $d$ at price $i_{d}$ has two components. The first component is the gain realized by selling the unit at price $i_{d}$ on day $d$. This is simply $i_{d}$. The second component is the possible gain that may be realized by holding the unit on day $d$ and selling on a subsequent day. The possible gain from holding is determined by the probability of selling at a higher price on a subsequent day. The probability of selling at a higher price on a subsequent day is determined by the cutoff price\index{Cutoff price} on each subsequent day and the probability of it being met or exceeded by the price on offer on that day. So, the expected utility of selling $n$ units on day $d$ is defined as,

\begin{equation}\label{e1}
\mathbb{E} [U_d(n)]=n{i_{d}}+\left( N_d - n \right)\left(\sum_{j=i^*_{d+1}}^{I}p_{j}j +f_{h, d+1}\sum_{j=1}^{i^*_{d+1}-1}p_{j}\right),
\end{equation}
where $N_d$ is the number of units available for sale on day $d$, $i_d$ is the offered price on day $d$, $i^*_{d+1}$ is the cutoff price\index{Cutoff price} on day $d+1$, $I$ is the highest price ever offered, $p_j$ is the probability that $j$ would be the unit price on offer on any day, and $f_{h,d+1}$ is the expected gain for a hold on day $d+1$ and is defined recursively as:

\begin{equation}\label{e2}
f_{h,d}=\begin{cases}
\sum\limits_{j=i^*_{d+1}}^{I}p_{j}j +f_{h, d+1}\sum\limits_{j=1}^{i^*_{d+1}-1}p_{j}  \ \ \ \ \ d=1,...,D-2\\
\\

\sum\limits_{j=i^*_{D}}^{I}p_{j}j  \ \ \ \ \ \ \ \ \ \ \ \ \ \ \ \ \ \ \ \  \ \ \ \ \ \ \ \ \ \ \ \ \ d=D-1\\
\\

0 \ \ \ \ \ \ \ \ \ \ \ \ \ \ \ \ \ \ \ \ \ \ \ \ \ \ \ \ \ \ \ \ \ \ \ \ \ \ \ \ \ \ d=D,
\end{cases}
\end{equation}
where $D$ is the last day, hence the highest numbered day in the study. If EUT\index{EUT} is considered as a model of human decision making\index{Decision making}, then the best strategy of selling can be stated as,

\begin{equation}\label{e3}
n^*_{d_{EUT}}=\arg\max_{n} \mathbb{E}[U_{d}(n)],
\end{equation}
where $n^*_{d_{EUT}}$ is the value of $n$ that maximizes $\mathbb{E}[U_d(n)]$.

\newtheorem{theorem}{Theorem}
\begin{theorem}EUT\index{EUT} model always predicts either selling all available units or nothing. In other words, $n^*_{d_{EUT}}$ is either equal to $N_d$ or zero.
\end{theorem}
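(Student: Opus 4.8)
The plan is to exploit the fact that the objective $\mathbb{E}[U_d(n)]$ in \eqref{e1} is an \emph{affine} function of the decision variable $n$. First I would abbreviate the per‑unit expected value of holding,
\[
h_d := \sum_{j=i^*_{d+1}}^{I} p_j\, j + f_{h,d+1}\sum_{j=1}^{i^*_{d+1}-1} p_j ,
\]
which for $d \le D-2$ is exactly $f_{h,d}$ from \eqref{e2}, with the appropriate truncations on the boundary days. With this notation \eqref{e1} rearranges to $\mathbb{E}[U_d(n)] = N_d\, h_d + n\,(i_d - h_d)$. The crucial structural observation is that $h_d$ does not depend on $n$: the future cutoff prices $i^*_{d+1}, i^*_{d+2}, \dots$ and the recursively defined quantities $f_{h,d+1}, \dots$ in \eqref{e2} are determined by the price forecast $\{p_j\}$ and the horizon $D$ alone, not by how many units the prosumer carries forward. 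Hence the objective is genuinely linear in $n$ with slope $i_d - h_d$.

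Second, I would maximize this affine function over the admissible set $n \in \{0, 1, \dots, N_d\}$. A linear function on a bounded integer interval attains its maximum at an endpoint: if $i_d - h_d > 0$ the unique maximizer is $n = N_d$ (sell everything); if $i_d - h_d < 0$ it is $n = 0$ (hold everything); and if $i_d - h_d = 0$ the objective is constant, so every $n$ is a maximizer and in particular both $n = 0$ and $n = N_d$ are. In all three cases one may take $n^*_{d_{EUT}} \in \{0, N_d\}$, which is the claim. This argument also reconciles the theorem with the definition of the cutoff price $i^*_d$ given before \eqref{e1}: selling all units is optimal precisely when $i_d$ meets or exceeds the smallest admissible price strictly exceeding $h_d$, i.e.\ when $i_d \ge i^*_d$.

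I expect the only delicate point to be making the ``$h_d$ is independent of $n$'' step airtight, since $f_{h,d+1}$ is defined by a recursion and one must verify that \eqref{e2} never references the current stock $N_d$ or the held amount $N_d - n$ — it references only future cutoff prices and forecast probabilities, so the recursion closes without reference to the present decision. A secondary, cosmetic point is the treatment of the boundary days $d = D-1$ and $d = D$, where $h_d$ takes the truncated forms in \eqref{e2} (respectively $\sum_{j=i^*_D}^{I} p_j\, j$ and $0$); the same affinity-plus-endpoint argument applies there verbatim, so no separate analysis is needed. Once the independence of $h_d$ from $n$ is granted, everything reduces to ``a linear function on an interval is maximized at an endpoint,'' and no further case work beyond the sign of $i_d - h_d$ is required.
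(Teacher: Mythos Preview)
Your proposal is correct and follows essentially the same approach as the paper: rewrite $\mathbb{E}[U_d(n)]$ as an affine function of $n$ with slope $i_d - h_d$ (the paper calls your $h_d$ by the name $s_{d_1}$), then do a sign-of-slope case split to conclude the maximum occurs at an endpoint. Your version is in fact slightly more careful than the paper's, since you explicitly handle the tie case $i_d = h_d$ and the boundary days, whereas the paper's proof is silent on both.
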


\begin{proof}
If we rearrange $\mathbb{E} [U_{d}(n)]$ as,

\begin{eqnarray}\nonumber \label{e4}
\mathbb{E} [U_{d}(n)] &=& n\left({i_{d}}-\left(\sum_{j=i^*_{d+1}}^{I}p_{j}j +  f_{h,d+1}\sum_{j=1}^{i^*_{d+1}-1}p_{j}\right)\right) \\
&& + N_d\left(\sum_{j=i^*_{d+1}}^{I}p_{j}j +f_{h, d+1}\sum_{j=1}^{i^*_{d+1}-1}p_{j}\right),
\end{eqnarray}
and 

\begin{equation}\label{e5}
s_d \coloneqq {i_{d}}-\left(\sum_{j=i^*_{d+1}}^{I}p_{j}j +f_{h, d+1}\sum_{j=1}^{i^*_{d+1}-1}p_{j}\right),
\end{equation}

\begin{equation}\label{e6}
s_{d_{1}} \coloneqq \sum_{j=i^*_{d+1}}^{I}p_{j}j +f_{h, d+1}\sum_{j=1}^{i^*_{d+1}-1}p_{j}.
\end{equation}
The values of ${i^*_{d}}$ and $p_j$ are always positive. Hence, $s_{d_{1}}$ is always positive for any values of ${i^*_{d}}$ and $p_j$. Therefore,

\begin{equation}\label{e7}
\mathbb{E} [U_{d}(n)]=n\left({i_{d}}-s_{d_{1}}\right)+N_d\left(s_{d_{1}}\right).
\end{equation}
As $N_d$ is always positive, the second term in \eqref{e7} is always positive. If ${i_{d}} < s_{d_{1}}$, the first term in \eqref{e7} becomes negative. Hence, to maximize \eqref{e7} the first term should be minimized. As,
\begin{equation}\label{e8}
n=0, 1,..., N_d,
\end{equation}
$n=0$ minimizes the first term or equivalently maximizes \eqref{e7}. In addition, if ${i_{d}} > s_{d_{1}}$, both terms in \eqref{e7} are positive. Hence, to maximize \eqref{e7}, the first term should be maximized. Considering \eqref{e8}, $n=N_d$ maximizes \eqref{e7}. It follows that,
\begin{equation}\label{e9}
n^*_{d_{EUT}}=\begin{cases}
N_d \ \ \ \ \ \ s_d>0\\
0 \ \ \ \ \ \ \ \  s_d<0. \end{cases}
\end{equation}

\end{proof}
\newtheorem{corollary}{Corollary}
\begin{corollary}
The sell all or nothing result is independent of cutoff prices\index{Cutoff price}, ${i^*_{d}}$.
\end{corollary}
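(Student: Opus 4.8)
The plan is to observe that the proof of the Theorem never used any special property of the cutoff prices beyond the fact that they fix the summation limits appearing in $s_{d_1}$; the decisive structural fact was simply that $\mathbb{E}[U_d(n)]$ is affine in $n$. Concretely, I would start from the rearrangement \eqref{e7}, $\mathbb{E}[U_d(n)] = n\left(i_d - s_{d_1}\right) + N_d\left(s_{d_1}\right)$, and note that, for a fixed day $d$, the quantity $s_{d_1}$ of \eqref{e6} is a constant with respect to the decision variable $n$: it depends on the $p_j$ and on the cutoff prices $i^*_{d+1},\dots,i^*_D$ (the latter entering both through the summation limits and recursively through $f_{h,d+1}$ via \eqref{e2}), but not on $n$. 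Hence $n \mapsto \mathbb{E}[U_d(n)]$ is an affine function on the finite domain $\{0,1,\dots,N_d\}$ of \eqref{e8}, and an affine function attains its maximum over a bounded set of consecutive integers at one of the two endpoints.

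The key point is then to separate the two distinct roles played by the cutoff prices. They do enter the slope $s_d = i_d - s_{d_1}$ of \eqref{e5}, so they can change the \emph{sign} of $s_d$ and therefore \emph{which} endpoint is optimal, exactly as recorded in \eqref{e9}. But no choice of $i^*_{d+1},\dots,i^*_D$ — whether the recursively-defined cutoffs of \eqref{e2} or any alternative values one might substitute — can destroy the affinity of $\mathbb{E}[U_d(n)]$ in $n$, since the cutoffs appear only inside the $n$-independent coefficient $s_{d_1}$. Consequently, for every admissible assignment of cutoff prices the optimizer $n^*_{d_{EUT}}$ remains confined to $\{0, N_d\}$; varying the cutoffs can move the decision from one of these extremes to the other, but never to an interior value.

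I would close by disposing of the only degenerate case, $s_d = 0$ (equivalently $i_d = s_{d_1}$): there $\mathbb{E}[U_d(n)] = N_d s_{d_1}$ for all $n$, so every $n \in \{0,\dots,N_d\}$ is optimal and, in particular, ``sell all'' and ``sell nothing'' are both among the optimizers, so the all-or-nothing characterization still holds \emph{a fortiori}. I do not anticipate a genuine obstacle here; the whole content of the corollary is the remark that the cutoff prices influence only the slope, not the affine form, of the utility in $n$, so the main care needed is expository — making explicit the distinction between ``which endpoint'' (cutoff-dependent) and ``an endpoint'' (cutoff-independent).
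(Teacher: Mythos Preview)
Your proposal is correct and follows the same reasoning as the paper: the corollary is not given a separate proof there but is meant to follow immediately from the observation, already made inside the proof of Theorem~1, that $s_{d_1}$ is positive ``for any values of ${i^*_{d}}$ and $p_j$'' and hence that the affine structure \eqref{e7} forces an endpoint optimizer regardless of the cutoffs. Your write-up is more explicit than the paper's (in particular, your separation of ``which endpoint'' from ``an endpoint'' and your treatment of the $s_d=0$ tie), but the underlying argument is the same.
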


\begin{corollary}
The sell all or nothing result is independent of probabilities of  prices, $p_i$.
\end{corollary}

While the decision resulting from the EUT\index{EUT}-based model is independent of prices, probabilities, and the cutoff price\index{Cutoff price}, numerical evaluation of the expected utility function defined in \eqref{e1} requires all the previously mentioned parameters. While the prices and their probabilities are assumed to be known in the model, we next show how the cutoff prices\index{Cutoff price} involved in the model can be numerically derived.

\subsubsection{Evaluating the Cutoff Price\index{Cutoff price}}

On day $D$, the last day of the study, the probability of selling any units at any price on a subsequent day is zero, hence the expected gain from holding on day $D$ is zero. Therefore, regardless of the price on offer on day $D$, including the lowest possible price, which is 1, a sell decision will be made for all available units on day $D$, i.e.,
\begin{equation}\label{e10}
i^*_{D}=1.
\end{equation}
The expected utility of selling $n$ units at price $i_{D-1}$ can be obtained from \eqref{e1}, \eqref{e2}, and \eqref{e10}. As \eqref{e2} states, when $d = D-1$, $f_{h,D} = 0$. As \eqref{e10} states, $i^*_D = 1$. When these values are inserted in \eqref{e1}, it readily follows that \eqref{e1} reduces to: 

\begin{equation}\label{e11}
\mathbb{E} [U_d(n)]=ni_d+\left( N_d - n \right)\left(\sum_{j=i^*_{d+1}}^{I}p_{j}{j}\right),
\end{equation}
and when $N_{d}=1$, it reduces to,
\begin{numcases}{\mathbb{E} [U_d(n)]=}
i_d  \ \ \ \ \ \ \ \ \ \ \ \ \ \ \ \ \  n=1 \label{e12a}
\\
\sum_{j=i^*_{d+1}}^{I}p_{j}{j}  \ \ \ \ \ \ \ n=0. \label{e12b}
\end{numcases}
We set $i_d$ equal to ascending values beginning with $i^*_D=1$, until we find the lowest value of $i_d$ for which $i_d> \sum_{j=i^*_{d+1}}^{I}p_{j}{j}$. We find $i^*_{D-1}$ by,

\begin{equation}\label{e13}
i^*_{D-1} = \lceil i_d \rceil, 
\end{equation}
where $\lceil . \rceil$ denotes ceiling function. For $d<D-1$, the expected gain from holding for a single unit can be given by the recursive Equation \eqref{e2}. When $N_{d}=1$, the expected utility reduces to,

\begin{numcases}{\mathbb{E} [U_d(n)]=}
i_d  \ \ \ \ \ \ \ \ \ \ \ \ \ \ \ \ \ \ \ \ \ \ \ \ \ \ \ \ \ \ \ \ \ \ \ \ \ \ \ \ \ \  \ \ \ \ n=1 \label{e14a}
\\
\sum_{j=i^*_{d+1}}^{I}p_{j}j +\left(\sum_{j=1}^{i^*_{d+1}-1}p_{j}\right)    f_{h, d+1} \ \ \ \ \ \ \ \ \ n=0,\label{e14b}
\end{numcases} 
where \eqref{e14a} is the gain for sell when $n=1$ and \eqref{e14b} is the expected gain for hold when $n=0$. To compute $i^*_d$, we find the smallest value of $i_d$ such that,

\begin{equation}\label{e15}
i_d>\sum_{j=i^*_{d+1}}^{I}p_{j}j +\left(\sum_{j=1}^{i^*_{d+1}-1}p_{j}\right) f_{h, d+1},
\end{equation}
i.e., $i^*_{d}$ is

\begin{equation}\label{e16}
i^*_{d} = \lceil i_d \rceil,
\end{equation}
for which \eqref{e15} is satisfied. To use  \eqref{e14b} to compute the value of $i^*_d$ we need the values of $i^*_{d+1}$ through $i^*_D$. Hence, to compute the value of each $i^*_d$ we begin with $i^*_D$ then compute the value of the cutoff price\index{Cutoff price} for each successively earlier day in the study.

\subsection{Bounded Horizon Model of Prosumer\index{Prosumer} Behavior}
In the bounded horizon model (denoted as EUT\textsubscript{TW}), a prosumer\index{Prosumer} may compute the probability of selling at a higher price over a fixed number of days called a time window\index{Time window} until near the end of the study when the remaining days are fewer than the number of days in the prosumer's time window\index{Time window}. Let $ \tilde{d}=D-d$, be the number of remaining days. So ${i^*_{w}}$ is the cutoff price\index{Cutoff price} that is the lowest price for which the gain from a sale is greater than the expected gain from a hold for a time window\index{Time window} of $w$ days, where $w=\min (\tilde{d}, t)$, and $t$, the time window\index{Time window}, is the number of days over which expected gain for a hold is computed by a participant\index{Participant}. The expected utility of selling $n$ units on day $d$ with a time window\index{Time window} of $w$ is defined as,
\begin{equation}\label{e17}
\mathbb{E} [U_{d,w}(n)]=n{i_{d}}+\left( N_d - n \right)\left(\sum_{j=i^*_{w-1}}^{I}p_{j}j +f_{h, w-1}\sum_{j=1}^{i^*_{w-1}-1}p_{j}\right),
\end{equation}
where $f_{h,w}$ is the expected gain for hold on day $d$ with a time window\index{Time window} of $w$ and is defined as,

\begin{equation}\label{e18}
f_{h,w}=\begin{cases}
\sum\limits_{j=i^*_{w-1}}^{I}p_{j}j +f_{h, w-1}\sum\limits_{j=1}^{i^*_{w-1}-1}p_{j}  \ \ \ \ \ \ \ \ \ \ \  w>1\\
\\
\sum\limits_{j=i^*_{w-1}}^{I}p_{j}j  \ \ \ \ \ \ \ \
\ \ \ \ \ \ \ \ \ \ \ \ \ \ \ \ \ \ \ \ \ \ \ \ \ \ w=1\\
\\
0 \ \ \ \ \ \ \ \ \ \ \ \ \ \ \ \ \ \ \ \ \ \ \ \ \ \ \ \ \ \ \ \ \ \ \ \ \ \ \ \ \ \ \ \ \ w=0.
\end{cases}
\end{equation}
If EUT\textsubscript{TW} is considered as a model of human decision making\index{Decision making}, then the best strategy of selling can be stated as,

\begin{equation}\label{e19}
n^*_{{d,w}_{EUT}}=\arg\max_{n} \mathbb{E}[U_{d,w}(n)].
\end{equation}

\begin{theorem} EUT model with a time window\index{Time window} always predicts either selling all available units or nothing. In other words, $n^*_{{d,w}_{EUT}}$ is either equal to $N_d$ or zero.
\end{theorem}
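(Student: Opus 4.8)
The plan is to reproduce the argument of Theorem~1 almost verbatim, exploiting the fact that \eqref{e17} is affine in the decision variable $n$ in exactly the same way \eqref{e1} was, with the only change being that the day-indexed quantities $i^*_{d+1}$ and $f_{h,d+1}$ are replaced by their window-indexed counterparts $i^*_{w-1}$ and $f_{h,w-1}$. Concretely, I would first collect the coefficient of $n$ by setting
\[
s_{d,w_1} \coloneqq \sum_{j=i^*_{w-1}}^{I} p_j\, j + f_{h,w-1}\sum_{j=1}^{i^*_{w-1}-1} p_j, \qquad s_{d,w} \coloneqq i_d - s_{d,w_1},
\]
so that \eqref{e17} rewrites as $\mathbb{E}[U_{d,w}(n)] = n\,s_{d,w} + N_d\,s_{d,w_1}$, a linear function of $n$ over the discrete domain $n \in \{0,1,\dots,N_d\}$.

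The second step is to verify that $s_{d,w_1}>0$. Here I would argue by induction on $w$ using the recursion \eqref{e18}: the base cases $f_{h,0}=0$ and $f_{h,1}=\sum_{j=i^*_{0}}^{I}p_j\,j\ge 0$ are immediate, and if $f_{h,w-1}\ge 0$ then every term in the $w>1$ branch of \eqref{e18} is nonnegative, so $f_{h,w}\ge 0$ as well. Since the cutoff prices satisfy $i^*_{w}\ge 1$ and the probabilities $p_j$ are nonnegative with at least one strictly positive price in the index range $[i^*_{w-1},I]$ (e.g.\ $p_I>0$), the sum $\sum_{j=i^*_{w-1}}^{I} p_j\,j$ is strictly positive, and hence $s_{d,w_1}>0$.

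The final step is the optimization itself: a linear function of an integer variable over a finite interval attains its maximum at an endpoint, and since the term $N_d\,s_{d,w_1}$ does not depend on $n$, the maximizer is $n=N_d$ when the slope $s_{d,w}=i_d-s_{d,w_1}$ is positive and $n=0$ when it is negative (with both endpoints optimal in the degenerate case $s_{d,w}=0$). This yields $n^*_{{d,w}_{EUT}}=N_d$ if $i_d>s_{d,w_1}$ and $n^*_{{d,w}_{EUT}}=0$ if $i_d<s_{d,w_1}$, which is precisely the claim. I do not anticipate a genuine obstacle; the only point requiring care is the induction establishing $f_{h,w}\ge 0$, so that the coefficient $s_{d,w_1}$ is genuinely positive rather than merely nonnegative, and — as was tacitly assumed in Theorem~1 — one should note that strict positivity relies on not all $p_j$ vanishing on the relevant index range.
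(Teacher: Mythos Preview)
Your proposal is correct and follows essentially the same route as the paper: rewrite \eqref{e17} as an affine function of $n$, observe that the constant $s_{w_1}$ (your $s_{d,w_1}$) is positive, and conclude the maximizer lies at an endpoint. Your inductive justification that $f_{h,w}\ge 0$ is in fact a bit more careful than the paper's one-line assertion, but the argument is otherwise the same.
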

\begin{proof}
If we rearrange $\mathbb{E}[U_{d,w}(n)]$ as,

\begin{eqnarray}\nonumber \label{e20}
\mathbb{E} [U_{d,w}(n)]&=&n\left({i_{d}}-\left(\sum_{j=i^*_{w-1}}^{I}p_{j}j +f_{h, w-1}\sum_{j=1}^{i^*_{w-1}-1}p_{j}\right)\right)+\\
&& N_d\left(\sum_{j=i^*_{w-1}}^{I}p_{j}j +f_{h, w-1}\sum_{j=1}^{i^*_{w-1}-1}p_{j}\right),
\end{eqnarray}

\begin{equation}\label{e21}
s_w \coloneqq {i_{d}}-\left(\sum_{j=i^*_{w-1}}^{I}p_{j}j +f_{h, w-1}\sum_{j=1}^{i^*_{w-1}-1}p_{j}\right),
\end{equation}
and
\begin{equation}\label{e22}
s_{w_{1}} \coloneqq \sum_{j=i^*_{w-1}}^{I}p_{j}j +f_{h, w-1}\sum_{j=1}^{i^*_{w-1}-1}p_{j}.
\end{equation}
As ${i^*_{w}}$ and $p_j$ are always positive, for any values of ${i^*_{w}}$ and $p_j$, $s_{w_{1}}$ is always positive. Therefore,

\begin{equation}\label{e23}
\mathbb{E} [U_{d,w}(n)]=n\left({i_{d}}-s_{w_{1}}\right)+N_d\left(s_{w_{1}}\right).
\end{equation}
As $N_d$ is always positive, the second term is always positive. If ${i_{d}} < s_{w_{1}}$, the first term in \eqref{e23} is negative. Therefore, to maximize \eqref{e23}, the first term should be minimized. Considering \eqref{e8}, $n=0$ makes the first term minimized or equivalently makes \eqref{e23} maximized. 
In addition, If ${i_{d}} > s_{w_{1}}$, both terms in \eqref{e23} are positive. Hence, to maximize \eqref{e23}, the first term should be maximized. It is clear that $n=N_d$ makes the first term maximized or equivalently makes \eqref{e23} maximized. It follows that,
\begin{equation}\label{e24}
n^*_{{d,w}_{EUT}}=\begin{cases}
N_d \ \ \ \ \ \ s_w>0\\
0 \ \ \ \ \ \ \ \  s_w<0. \end{cases}
\end{equation}

\end{proof}

\begin{corollary}
The sell all or nothing result is independent of cutoff prices\index{Cutoff price}, ${i^*_{w}}$.
\end{corollary}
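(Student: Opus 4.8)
The plan is to read off the claim directly from the normalized form of the expected utility that was already established in the proof of Theorem~2. Recall that there one rewrites the objective as
\[
\mathbb{E}[U_{d,w}(n)] = n\left(i_{d} - s_{w_{1}}\right) + N_d\, s_{w_{1}},
\]
which is \eqref{e23}. The point to extract is that this is an \emph{affine} function of the decision variable $n$, and that the cutoff prices $i^*_w$ (together with the earlier cutoffs $i^*_{w-1},\dots$ that feed the recursion \eqref{e18}) enter it \emph{only} through the scalar $s_{w_{1}}$ defined in \eqref{e22}. They do not change the fact that $\mathbb{E}[U_{d,w}(n)]$ is linear in $n$ plus a constant; they only influence the coefficients.

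First I would state precisely that $n\mapsto \mathbb{E}[U_{d,w}(n)]$ is affine with slope $s_w = i_d - s_{w_{1}}$ (equation \eqref{e21}) and that, per \eqref{e8}, $n$ ranges over the finite set $\{0,1,\dots,N_d\}$ of consecutive integers. A real-valued affine function restricted to such a set is monotone, hence attains its maximum at one of the two endpoints $n=0$ or $n=N_d$; and this endpoint conclusion holds \emph{for every real value of the slope} $s_w$ — notably, it does not even require the positivity of $s_{w_{1}}$ used elsewhere in the proof of Theorem~2. Since $s_w$ is the only quantity on which the cutoff prices bear, it follows that, whatever values $i^*_w, i^*_{w-1},\dots$ take, the maximizer $n^*_{{d,w}_{EUT}}$ lies in $\{0,N_d\}$; varying the cutoff prices can only change \emph{which} of the two endpoints is optimal (by flipping the sign of $s_w$), never the all-or-nothing character of the solution. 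Assembling these observations reproduces \eqref{e24} with no hypothesis on $i^*_w$, which is exactly the corollary.

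The only point that needs a word of care is the degenerate case $s_w = 0$: then $\mathbb{E}[U_{d,w}(n)]$ is constant in $n$, every $n$ is optimal, and in particular both endpoints $0$ and $N_d$ remain optimal, so the statement still holds (this is the same edge case tacitly handled by the strict inequalities in \eqref{e24}). Apart from that, there is essentially no obstacle here: the corollary is an immediate consequence of the linearity of $\mathbb{E}[U_{d,w}(n)]$ in $n$ already exhibited in \eqref{e23}, and the argument amounts to the observation that the cutoff prices occur in that identity only in the coefficients, not in the degree in $n$. The same remark would also give the companion fact (paralleling the corollary to Theorem~1) that the all-or-nothing result is independent of the price probabilities $p_j$, since those, too, enter \eqref{e23} only through $s_{w_{1}}$.
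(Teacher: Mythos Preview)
Your argument is correct and is essentially the paper's own: the corollary is stated without a separate proof, as an immediate consequence of the affine form \eqref{e23} established in Theorem~2, and you have simply made that inference explicit. If anything, your version is slightly more careful than the paper's, since you observe that the endpoint conclusion follows from affinity alone and does not require the positivity of $s_{w_1}$ invoked in the proof of Theorem~2, and you address the tie case $s_w=0$ that the strict inequalities in \eqref{e24} leave implicit.
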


\begin{corollary}
The sell all or nothing result is independent of probabilities of  prices, $p_i$.
\end{corollary}

\subsubsection{Evaluating the Cutoff Prices\index{Cutoff price}}

Following the approach outlined in the previous section, the cutoff prices\index{Cutoff price} in the EUT\textsubscript{TW} model can be obtained as follows. When Equations \eqref{e14a} and \eqref{e14b} are used to compute $i^*_d$, this is also the cutoff price\index{Cutoff price} for $t=D-d$. The $D-d$ days remaining in the study are a time window\index{Time window} of $D-d$ days. Hence, when $\tilde{d}>t$, $t$ is used. However, when $\tilde{d}<t$ it makes no sense to use $t$ because the probability of a sale beyond $\tilde{d}$ days is zero. When $\tilde{d} \leq t$ the EUT\textsubscript{TW} model becomes the EUT\index{EUT} model. So $\tilde{d}$ is used to compute the cutoff price\index{Cutoff price}. So $w=\min (\tilde{d}, t)$ and is used instead of $t$. Therefore, when $N_{d}=1$ the expected utility reduces to:

\begin{numcases}{\mathbb{E} [U_{d,w}(n)]=}
i_d  \ \ \ \ \ \ \ \ \ \ \ \ \ \ \ \ \ \ \ \ \ \ \ \ \ \ \ \ \ \ \ \ \ \ \ \ \ \ \ \ \ \ \ \ \ \ \ n=1 \label{e25a}
\\
\sum_{j=i^*_{w-1}}^{I}p_{j}j +\left(\sum_{j=1}^{i^*_{w-1}-1}p_{j}\right)  f_{h, w-1} \ \ \ \ \ \ \ \ n=0,\label{e25b}
\end{numcases} 
where \eqref{e25a} is the gain from a sale ($n=1$) and \eqref{e25b} is the expected gain from a hold ($n=0$). Notice that \eqref{e25a} and \eqref{e25b} are identical to \eqref{e14a} and \eqref{e14b} except for the substitution of $w$, which is decremented, for $d$, which is incremented. Hence, when $w=\tilde{d}$ then \eqref{e25a} and \eqref{e25b} are identical to \eqref{e14a} and \eqref{e14b} except for notation. $\tilde{d}=D-d$ and so $i^*_d$ and $i^*_{w=\tilde{d}}$ are different ways of denoting the cutoff value for the same day. However, when $w=t$ then \eqref{e25a} and \eqref{e25b} computes the cutoff price\index{Cutoff price} for a smaller number of days than is used by \eqref{e14a} and \eqref{e14b}. When $w = 0$, whatever units are available for sale will be sold at whatever price is offered.

\begin{theorem} Frequency of selling in EUT with windowing is greater than EUT\index{EUT}.
\end{theorem}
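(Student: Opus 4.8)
The plan is to exploit the fact that a shorter evaluation horizon lowers the expected payoff from holding, which lowers the cutoff price, which makes ``sell'' the preferred action on a larger set of offered prices; summing this over the days of the study then gives the statement. Concretely I would proceed in three steps: (i) show the hold value $f_{h,w}$ is non-decreasing in the window length $w$; (ii) deduce that the cutoff prices $i^*_w$ are non-decreasing in $w$, so that on every day the EUT\textsubscript{TW} cutoff is no larger than the EUT cutoff; and (iii) convert this ordering of cutoffs into an ordering of sell frequencies using the sell-all-or-nothing descriptions in Theorems~1 and~2.

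For step (i) the key move is to rewrite the recursion \eqref{e18} in the compact form $f_{h,w}=\mathbb{E}_{i}\!\left[\max(i,f_{h,w-1})\right]$ with $f_{h,0}=0$, where $i$ denotes the offered price with law $\{p_j\}$. This follows from \eqref{e18} together with the defining property of the cutoff: by construction $i^*_{w-1}$ is the least integer price at which a sale is at least as good as holding for $w-1$ more days, i.e.\ the least integer strictly exceeding $f_{h,w-1}$, so that for an integer price $j$ one has $j\ge i^*_{w-1}\iff j>f_{h,w-1}$ and $j\le i^*_{w-1}-1\iff j\le f_{h,w-1}$; substituting these equivalences into \eqref{e18} merges its two branches into $\sum_j p_j\max(j,f_{h,w-1})$. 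Monotonicity is then immediate, term by term: $f_{h,w}=\mathbb{E}_i[\max(i,f_{h,w-1})]\ge\mathbb{E}_i[f_{h,w-1}]=f_{h,w-1}$, with base case $f_{h,1}=\mathbb{E}_i[i]\ge 0=f_{h,0}$ — no induction is actually needed once the $\max$-form is in hand. (One should separately dispatch the degenerate case in which $f_{h,w-1}$ already exceeds the top price $I$, where both sides equal $f_{h,w-1}$.)

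For step (ii), since $i^*_w$ is the least integer exceeding $f_{h,w}$ and $w\mapsto f_{h,w}$ is non-decreasing, $w\mapsto i^*_w$ is non-decreasing as well. On day $d$ the classical model uses the cutoff for the full remaining horizon $\tilde d=D-d$, namely $i^*_{\tilde d}$ (the same quantity as $i^*_d$ of Section~III-A, as the text following \eqref{e25b} notes), whereas EUT\textsubscript{TW} uses $i^*_w$ with $w=\min(\tilde d,t)\le\tilde d$; hence $i^*_{\min(\tilde d,t)}\le i^*_{\tilde d}$ on every day, and the inequality is strict whenever $\tilde d>t$, because here $f_{h,\cdot}$ is in fact strictly increasing (the price law puts positive mass above every attainable value of $f_{h,w}$, so $f_{h,w}-f_{h,w-1}=\mathbb{E}_i[(i-f_{h,w-1})^+]>0$), so the two cutoffs separate once the horizons differ by enough.

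For step (iii), by Theorems~1 and~2 (equations \eqref{e9} and \eqref{e24}) the day-$d$ choice is to sell all available units when the relevant $s$ is positive, i.e.\ when the offered price exceeds the relevant hold value $f_{h,\cdot}$, i.e.\ (for integer prices) when $i_d\ge$ the model's cutoff on that day, and to hold otherwise. So under either model a sale occurs on day $d$ exactly on the price realizations $\{i_d\ge c\}$, with $c$ that model's cutoff, provided the prosumer is holding stock — which both models generically are, being seeded with the same $5$ units and fed the same surplus stream. Since the EUT\textsubscript{TW} cutoff never exceeds the EUT cutoff, the set of price realizations triggering a sale under EUT\textsubscript{TW} contains that for EUT on every day, so the per-day sale probability, hence the expected number of sale days over the study, is at least as large under EUT\textsubscript{TW}, and strictly larger because on the (many) days with $\tilde d>t$ the cutoffs differ and the prices lying between them carry positive probability (Table~\ref{t1}). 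I expect step (i) to be the crux — spotting and justifying the $\max$-form of \eqref{e18}, including the integrality bookkeeping hidden in the ceiling in the cutoff definition — after which the rest is monotonicity plus the all-or-nothing lemma. A secondary point needing a sentence of care is the inventory coupling in step (iii): because EUT\textsubscript{TW} sells more eagerly it may occasionally be out of stock on a day when EUT is not, so the cleanest formulation compares the two policies at the level of the sell-threshold rule, which is what the theorem is really asserting.
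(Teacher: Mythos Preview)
Your argument is correct and follows the same skeleton as the paper's own proof: both hinge on the single observation that $f_{h,w}$ (equivalently $s_{w_1}$ in \eqref{e22}--\eqref{e23}) is non-decreasing in the window length $w$, from which the comparison of sell frequencies is immediate via the all-or-nothing characterization. The paper's proof simply asserts this monotonicity and then reads off the conclusion from \eqref{e23}; you go further by actually justifying it through the $\max$-form rewriting $f_{h,w}=\mathbb{E}_i[\max(i,f_{h,w-1})]$, which is the right way to see it and fills a genuine gap in the paper's exposition. Your step~(ii), phrasing things through the cutoff prices $i^*_w$, and the paper's direct use of $s_{w_1}$ are equivalent since $s_{w_1}=f_{h,w}$ and $i^*_w=\lceil f_{h,w}\rceil+1$ (in the sense of \eqref{e15}--\eqref{e16}). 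Your closing caveat about inventory coupling is also well taken and absent from the paper; as you note, the cleanest reading of the theorem is at the level of the threshold rule rather than realized sales.
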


\begin{proof}
$f_{h,w}$ in \eqref{e18} is an increasing function of $w$; that is, the gain from holding increases as window size increases. Reconsidering \eqref{e23}, the second term is always positive, and because $s_{w_{1}}$ is an increasing function of $w$, therefore, the first term is a decreasing function of $w$. Consequently, for a constant $N_d$ and a specific ${i_{d}}$, when time window\index{Time window} is being increased, the first term becomes smaller. Thus, it reduces the frequency of selling all or it increases the frequency of selling nothing. 
\end{proof}

\section{Data Fitting\index{Data fitting}, Results, and Analysis}\label{1.4}

Table \ref{t2} shows the cutoff price\index{Cutoff price} as a function of the number of days remaining in the study, computed from Equations \eqref{e15} and \eqref{e16}. As shown in the table, the cutoff price\index{Cutoff price} necessarily declines as the study approaches its end and the number of days on which a higher price might be offered is reduced. Fig. \ref{f1} shows the cutoff price\index{Cutoff price} as a function of both time window\index{Time window} and number of days remaining in the simulation.

Note that while the study took place over a 10-week period, no data was collected on two of those days (one holiday and one day on which the data was corrupted). Thus, the following analyses are based on 68 days of data collection.

\begin{table}[b]
\centering
\centering
\caption{Number of remaining days in the study for which each cutoff price applies.}
\begin{tabular}{c|c}
Cutoff price, $i^*_{d}$ & Days remaining, $\tilde{d}$ \\
\hline
 14& $\geq$31 \\
13 & 16--30    \\
12 & 9--15     \\
11 & 6--8      \\
10 & 4--5      \\
9  & 3        \\
8 & 2         \\
7 & 1         \\
1 & 0          
\end{tabular}
\label{t2}
\end{table}

\begin{figure}[t]
\centering
\includegraphics[width=4in]{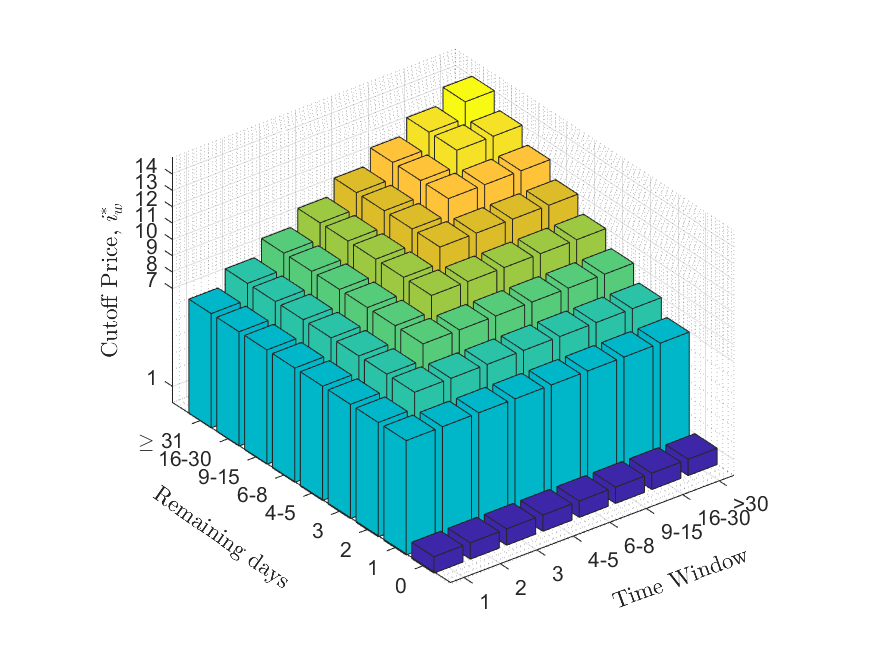}
\caption{Initial cutoff price as a function of time window (TW) and number of days remaining in simulation.}
\label{f1}
\end{figure}

\subsection{Data Fitting\index{Data fitting} with Mean Deviation}
For each participant\index{Participant}, for each day of the study, let $N_d$ be the number of units available for sale, $n^*_d$ be the number of units that a model predicted should be sold, and $n_d$ be the number of units sold. Obviously, the deviation between the proportions of predicted and observed units sold is only meaningful when $N_d> 0$. When $N_d> 0$, the deviation between the proportions of predicted and observed units sold is $|\frac{n^*_d}{N_d}-\frac{n_d}{N_d}|$. The mean deviation for all the days of the study on which $N_d > 0$ is the fit of the model. So, the time window\index{Time window} that generates the smallest mean deviation for a participant\index{Participant} is the best-fitting time window\index{Time window} for that participant\index{Participant}. The mean deviation for EUT\index{EUT} can be stated as,
\begin{equation}\label{e26}
MD=\frac{\sum_{d \in D'}|\frac{n^*_d}{N_d}-\frac{n_d}{N_d}|}{|D'|},
\end{equation}
and for EUT\textsubscript{TW} is,
\begin{equation}\label{e27}
MD=\frac{\sum_{d \in D'}|\frac{n^*_{d,w}(w)}{N_d}-\frac{n_d}{N_d}|}{|D'|},
\end{equation}
where the set $D'=\{\,d \mid N_d>0\,\}$, and $|D'|$ is the number of days on which $N_d > 0$. Consequently, the optimal time window\index{Time window} is fit to the data using
\begin{equation}\label{e28}
w^*=\arg\min_{w} MD.
\end{equation}

\begin{figure}[t]
\centering\includegraphics[width=4in]{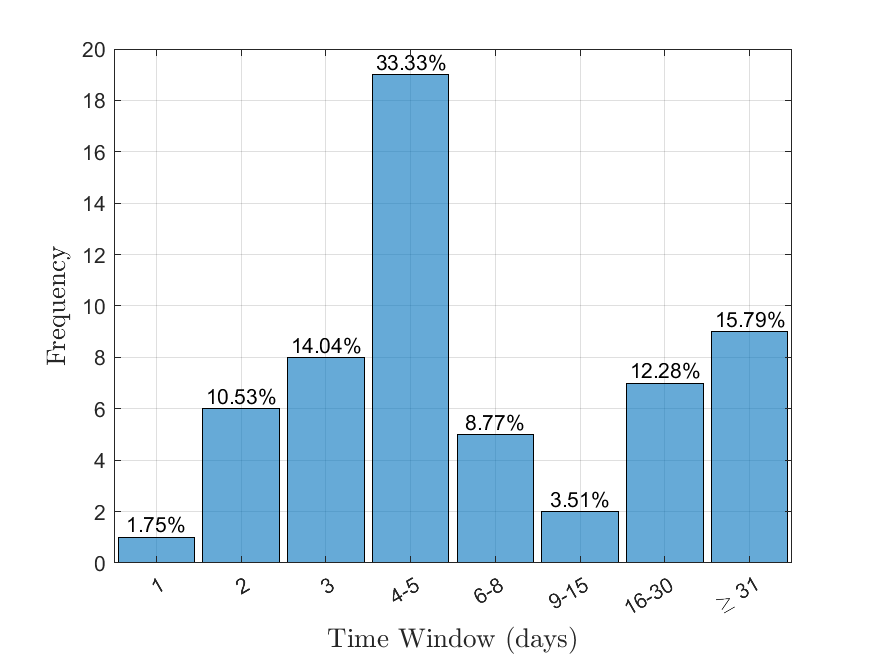}
\caption{The percent of participants whose responses were best fit by each time window.}
\label{f2}
\end{figure}

The responses of each participant\index{Participant} were compared with those predicted by the cutoff prices\index{Cutoff price} for a sell decision for each time window\index{Time window}, as shown in Fig. \ref{f1}. Fig. \ref{f2} shows the histogram frequency distribution over participants\index{Participant} for the size of the best-fitting time window\index{Time window}.

EUT\index{EUT} does not include the concept of a time window\index{Time window}. For EUT\index{EUT} on each day of the study the probability that each price would be the highest would be computed for the remaining days of the study. For example, because of the distribution of daily price probabilities in this study, until there were only 30 days remaining in the study (a range of from 67 to 31 days remaining) a cutoff price\index{Cutoff price} of 14 was predicted.

EUT\index{EUT} predicted a sell decision on only four days of the study, including a day when the price offered was a maximum of 15 and the last day of the study, when selling is trivially predicted by every model. Hence, EUT\index{EUT} made only two nontrivial sell predictions. In fact, 56 of the 57 participants\index{Participant} sold more than four days as shown in Fig. \ref{f13}. The mean number of sell decisions for all participants\index{Participant} was 20 with a standard deviation of 11.5, and a range from 3 to 56.

\begin{figure}[t]
\centering\includegraphics[width=4in]{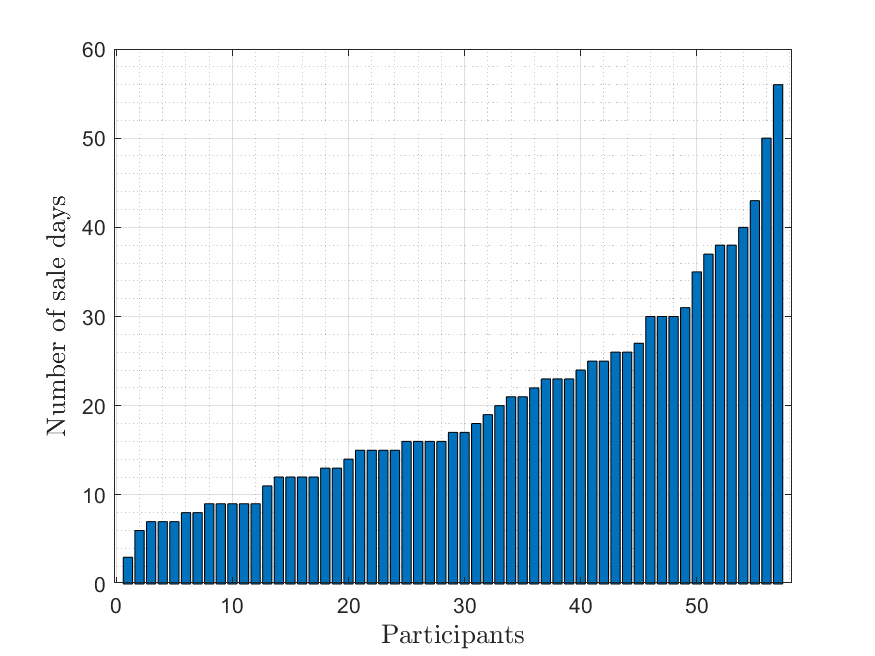}
\caption{The number of sell decisions for each participant in 10-week study.}
\label{f13}
\end{figure}

The frequency with which our participants\index{Participant} made sell decisions demonstrated that most of them adopted time windows\index{Time window} over which the probability of a gain was computed rather than using the remaining days in the study to compute the gain from a hold. As can be seen from Fig. \ref{f2}, the distribution of time windows\index{Time window} was bimodal with peaks at 4--5 days and at 31--68 days. However, the peaks are not of equal size. Only 16\% of the participants\index{Participant} were best fit by a time window\index{Time window} ranging from 31--68 days, which is the range consistent with EUT\index{EUT}. Data from 84\% of the participants\index{Participant} were best fit using time windows\index{Time window} shorter than 31 days; the great majority of those appearing to use time windows\index{Time window} of less than 9 days in determining whether to sell or hold.

Fig. \ref{f3} shows, for each model, the mean deviation between the number of units predicted to be sold and the number of units actually sold by a participant\index{Participant}. Each of the 57 participants\index{Participant} is represented in the figure by a pair of points--one for the average deviation of their data from the EUT\index{EUT} prediction and the other for average deviation from the EUT\textsubscript{TW} prediction with the best fit of model parameter $w$. Participants\index{Participant} are ordered on the $x$ axis according to the number of days on which they sold at least one unit, from the lowest number of sale days to highest. As can be seen clearly from the plot, EUT\textsubscript{TW} provides a better fit than EUT\index{EUT}.

\begin{figure}[t]
\centering\includegraphics[width=4in]{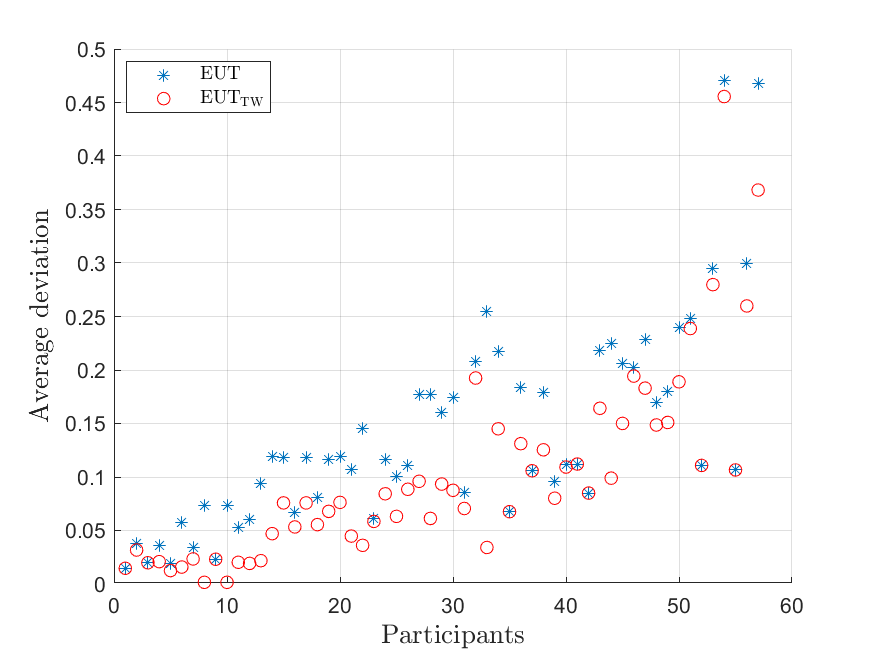}
\caption{Average deviation for EUT and average deviation for EUT with time window}
\label{f3}
\end{figure}

\subsection{Data Fitting\index{Data fitting} with Proportional Deviation}

A more detailed analysis of the result can be made by computing the proportional deviation between the number of units predicted to be sold by a model and the number of units actually sold by a participant\index{Participant}. For a participant\index{Participant}, for each day of the study, the absolute difference between the number of units predicted to be sold and the number of units actually sold is computed. Then, the sum of the deviations is divided by the sum of the maximum of either the number of units predicted to be sold or the number of units sold. Hence, the proportional deviation for EUT\index{EUT} can be stated as, 
\begin{equation}\label{e29}
PD=\frac{\sum_{d \in D'}|n^*_d-n_d|}{\sum_{d \in D'}\max(n^*_d,n_d)}, 
\end{equation}
and for EUT\textsubscript{TW} is,
\begin{equation}\label{e30}
PD=\frac{\sum_{d \in D'}|n^*_{d,w}(w)-n_d|}{\sum_{d \in D'}\max(n^*_{d,w}(w),n_d)}.
\end{equation}
Therefore, the optimal time window\index{Time window} is fit to the data using
\begin{equation}\label{e31}
w^*=\arg\min_{w} PD.
\end{equation}

\begin{figure}[t]
\centering\includegraphics[width=4in]{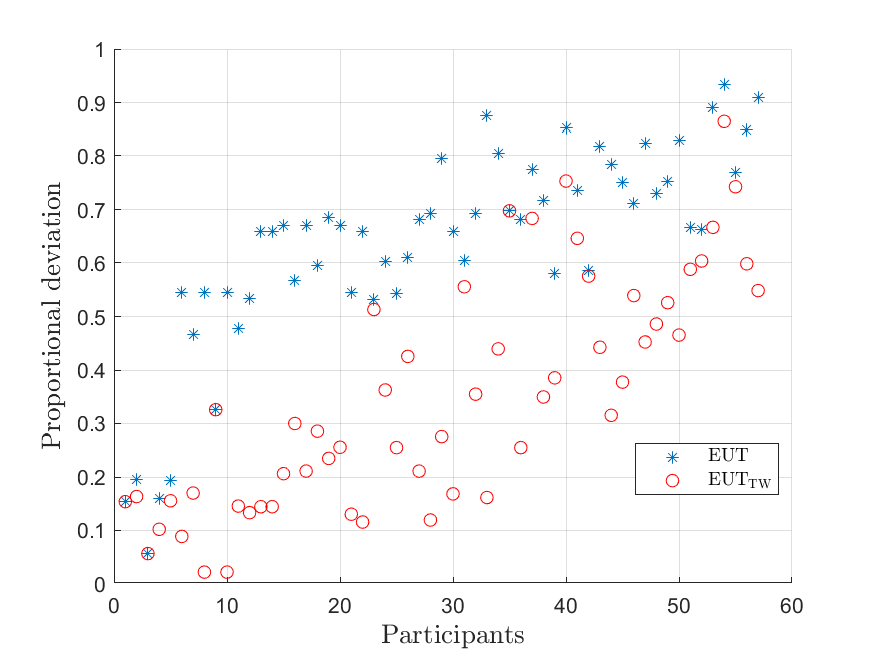}
\caption{Proportional deviation for EUT and proportional deviation for EUT with time window}
\label{f4}
\end{figure}
 \begin{figure}[t]
\centering\includegraphics[width=4in]{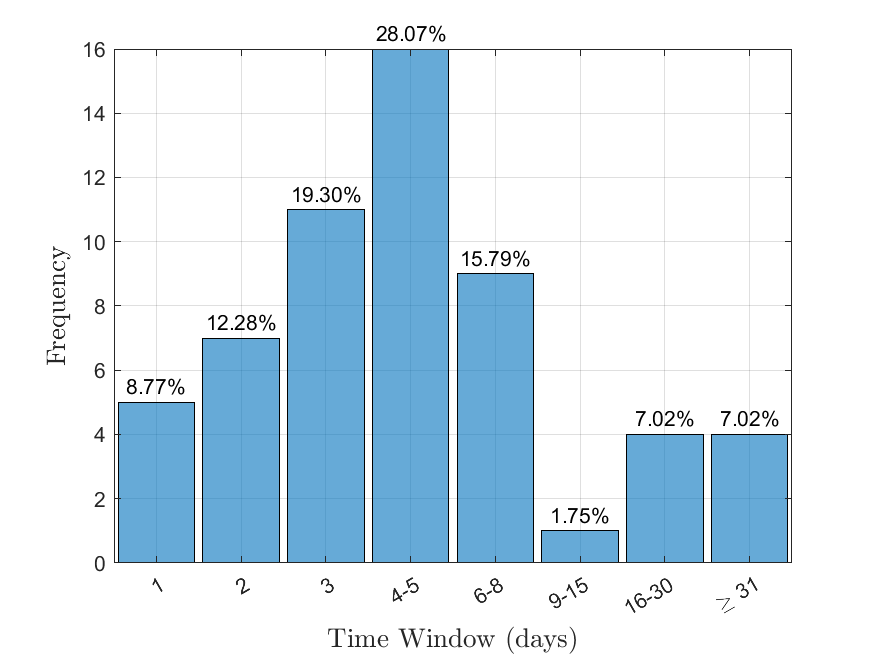}
\caption{The percent of participants whose responses were best fit by each time window based on the proportional deviation method.}
\label{f5}
\end{figure}
The proportional deviation is a number between 0 and 1, where 0 is a perfect fit of the model predictions to the results. Fig. \ref{f4} shows the proportional deviation for EUT\index{EUT} and proportional deviation for EUT\textsubscript{TW} (with the best fit of model parameter $w$) for each participant\index{Participant} in the study, ordered from those participants\index{Participant} who made the fewest number of sales in the grid game\index{Grid game} to those who made the most. These results demonstrate even more clearly than those plotted in Fig. \ref{f3} that the concepts of horizon and time window\index{Time window} could be essential to the understanding of ongoing markets, including the energy market\index{Energy market} studied here.

Fig. \ref{f5} shows the histogram frequency distribution over participants\index{Participant} for the size of the best-fitting time window\index{Time window}. This plot is obtained based on fitting procedure for a time window\index{Time window} through proportional deviation.

When a large payout has a low daily probability but necessarily has a high probability of occurring over the long term, human participants\index{Participant} do not wait around for it to occur; almost never participating in the market at all, as EUT\index{EUT} prescribes. Rather, for sell decisions, they make use of time windows\index{Time window} that are undoubtedly related in some meaningful way to their lives. As can be seen, EUT\textsubscript{TW} necessarily provides a better fit than EUT\index{EUT}. Because EUT\index{EUT} is a special case of EUT\textsubscript{TW}, at worst, the fits of the two models would be identical. However, whenever a participant\index{Participant} in the grid game\index{Grid game} sold more than the three times predicted by EUT\index{EUT}, the fit for EUT\textsubscript{TW} was better than for EUT\index{EUT}.

\section{Discussion}\label{1.5}

Given the parameters defining this study's smart grid\index{Smart grid} scenario\index{Scenario}, EUT\index{EUT} predicts that rational participants\index{Participant} would sell their surplus energy units on only four days in the study (the three days when the price exceeded the cutoff price\index{Cutoff price} and on the final day).  Furthermore, EUT\index{EUT} predicts that our prosumers\index{Prosumer} should always sell either none of their units or all of them.  Even a cursory analysis of the data revealed that our participants' sell/hold decisions were not aligned with EUT\index{EUT}. Many participants\index{Participant} sold much more frequently than EUT\index{EUT} predicted and often sold only a portion of their energy units.

The addition of a time window\index{Time window} parameter to EUT\index{EUT} allows for a better fit to our participants' selling behavior.
The bimodal nature of the distribution of best-fitting time windows\index{Time window} in Fig. \ref{f2} and, to a lesser extent, in Fig. \ref{f5} obviates the individual differences among our participants\index{Participant}.  A few of those near the far right of the distributions (with time windows\index{Time window} exceeding 31 days), are fit just as well by traditional EUT\index{EUT} as by EUT\textsubscript{TW}. Based upon the demographic information gathered prior to the study and on a poststudy questionnaire, these few individuals had slightly more years of education and tended to have taken more math and science courses in school, although neither of these trends were statistically significant.  The relationship of decision behavior to numeracy and other cognitive measures would be interesting to explore further in future research.

Parallels between our participants' decision behavior and that of investors in some situations are worth exploring further. The underlying mechanism for the ``sell too soon" side of the disposition effect may be similar to that which led many of our smart grid\index{Smart grid} participants\index{Participant} to sell energy units much more frequently (and less optimally) than would be predicted by EUT\index{EUT}. The finding that most of our study participants\index{Participant} often made their sell/hold decisions with a relatively short, self-imposed time window\index{Time window} is also conceptually similar to the previously-mentioned MLA phenomenon. To the extent that this myopia is temporal in nature--i.e., due to adopting a short-term horizon when longer-term would be more appropriate--it seems similar to the bounded time window\index{Time window} many of our grid study participants\index{Participant} adopted for their sell/hold decision.

Another related concept from investing and behavioral economics is ``narrow framing" \cite{Tha1,Tha2}. Narrow framing holds that investors (or others making decisions under uncertainty) evaluate a given sell/hold decision or other gamble in isolation rather than evaluating it in the context of their entire portfolio and/or a longer-term horizon.  That is, they fail to appreciate ``the big picture."  It may be useful to think of MLA and the bounded time window\index{Time window} findings of the present study as instances of narrow framing.

Two other research paradigms where temporal horizons\index{Temporal horizon} are an important factor in decision making\index{Decision making} are the previously-mentioned job search studies and the temporal discounting literature. In their experiments loosely simulating a job search scenario\index{Scenario}, Cox and Oaxaca \cite{Cox1,Cox2} found that study participants\index{Participant} terminated their searches sooner than predicted by a risk-neutral linear search model.  The data were fit significantly better by a risk-averse model.  Their subjects' tendency to decide to terminate their searches and accept suboptimal job offers somewhat parallels our participants' tendency to choose suboptimal, shorter time windows\index{Time window} for selling their energy units.  However, there are too many differences in the experimental procedures and analytical approaches across the studies to conclude that the underlying behavioral processes are the same.

The temporal (or delay) discounting literature offers another research framework in which to consider findings of the present study. In the typical temporal discounting example, an individual given a choice between a small reward delivered now or soon, and a relatively larger award delivered at a later time, will tend to choose the smaller, sooner reward.  Of course, many factors can influence the shape of the discounting function, and can even reverse the preference for smaller--sooner over larger--later. There are excellent reviews of this literature from an economic perspective \cite{Frederick2002} and from a more psychological perspective \cite{Green2004ADF}.  It is tempting to postulate that the choice of relatively short time windows\index{Time window} and relatively smaller rewards (unit prices) by many of our grid study participants shares the same underlying decision-making mechanism with that which leads to temporal discounting. Once again though, large differences in experimental paradigms and modeling approaches make it difficult to directly compare results. Green and Myerson \cite{Green2004ADF} cite extensive evidence for the similarities between temporal discounting and the discounting of probabilistic rewards.  Although they tentatively conclude that different underlying mechanisms are at work, their attempt to study both phenomena within a common research framework is laudable.


%





\ifCLASSOPTIONcaptionsoff
  \newpage
\fi



\bibliographystyle{IEEEtran}
\bibliography{IEEEfull,bare_jrnl}
\end{document}